  \providecommand\BibTeX{{%
    \normalfont B\kern-0.5em{\scshape i\kern-0.25em b}\kern-0.8em\TeX}}}
\newcommand{\prob}{\mathbb{P}}
\newcommand{\V}{V}
\DeclareMathOperator{\pa}{Pa}
\newcommand{\Value}{\mathtt{Value}}
\newcommand{\Click}{\mathtt{Click}}
\newcommand{\Reply}{\mathtt{Reply}}
\newcommand{\Open}{\mathtt{Open}}
\newcommand{\Fav}{\mathtt{Fav}}
\newcommand{\Favorite}{\mathtt{Favorite}}
\newcommand{\RT}{\mathtt{RT}}
\newcommand{\SLO}{\mathtt{SLO}}
\newcommand{\LC}{\mathtt{Link Click}}
\newcommand{\VW}{\mathtt{Vid Watch}}
\newcommand{\Linger}[1]{\mathtt{Linger >#1s}}
\newcommand{\Quote}{\mathtt{Quote}}
\newcommand{\UAM}{\mathtt{UAM}}
\newcommand{\Ntab}{\mathtt{NTab} \mathtt{View}}
\newcommand{\OptOut}{\mathtt{Opt} \mathtt{Out}}
\newcommand{\Bhvrs}{\mathtt{Behaviors}}
\newcommand{\Bhvr}{\mathtt{Behavior}}
\newcommand{\Msr}{A}
\DeclareMathOperator{\MB}{MB}
\DeclareMathOperator{\Pa}{Pa}
\newtheorem{theorem}{Theorem}
\newtheorem{corollary}{Corollary
}
\newtheorem{assumption}{Assumption
}
\def\thm@space@setup{%
  \thm@preskip=\parskip \thm@postskip=0pt
}
\begin{document}

\title{From Optimizing Engagement to Measuring Value}

\author{Smitha Milli}
\authornote{Work done while the author was an intern at Twitter.}
\affiliation{%
  \institution{UC Berkeley}
}
\email{smilli@berkeley.edu}

\author{Luca Belli}
\affiliation{%
  \institution{Twitter}
}
\email{lbelli@twitter.com}
\author{Moritz Hardt}
\authornote{†MH is a paid consultant at Twitter. Work performed while consulting for Twitter.}
\affiliation{%
  \institution{UC Berkeley}
}
\email{hardt@berkeley.edu}

\renewcommand{\shortauthors}{Smitha Milli, Luca Belli, and Moritz Hardt}

\begin{abstract}
Most recommendation engines today are based on predicting user engagement, e.g. predicting whether a user will click on an item or not. However, there is potentially a large gap between engagement signals and a desired notion of \emph{value} that is worth optimizing for. We use the framework of measurement theory to (a) confront the designer with a normative question about what the designer values, (b) provide a general latent variable model approach that can be used to operationalize the target construct and directly optimize for it, and (c) guide the designer in evaluating and revising their operationalization. We implement our approach on the Twitter platform on millions of users. In line with established approaches to assessing the \emph{validity} of measurements, we perform a qualitative evaluation of how well our model captures a desired notion of ``value''.
\end{abstract}

\maketitle

\defcitealias{standards2014}{AERA, APA, NCME, \citeyear{standards2014}}
\defcitealias{messick1987validity}{\citeauthor{messick1987validity}, \citeyear{messick1987validity}}
\defcitealias{reeves2016contemporary}{\citeauthor{reeves2016contemporary}, \citeyear{reeves2016contemporary}}

\section{Introduction} 

Most recommendation engines today are based on predicting user engagement, e.g. predicting whether a user will click an item or not. However, there is potentially a large gap between engagement signals and a desired notion of \emph{value} that is worth optimizing for \citep{ekstrand2016behaviorism}. Just because a user engages with an item doesn't mean they value it. A user might reply to an item because they are angry about it, or click an item in order to gain more information about it \citep{wen2019leveraging}, or watch addictive videos out of temptation. 

It is clear that engagements provide some signal for ``value'', but are not equivalent to it. Further, different types of engagement may provide differing levels of evidence for value. For example, if a user explicitly likes an item, we are more likely to believe that they value it, compared to if they had merely clicked on it. Ideally, we want the objective for our recommender system to take engagement signals into account, but only insofar as they relate to a desired notion of ``value''. However, directly specifying such an objective is a non-trivial problem. Exactly how much should we rely on likes versus clicks versus shares and so on? How do we evaluate whether our designed objective captures our intended notion of ``value''?

\subsection{Our contributions}

We make three primary contributions.

1. We propose measurement theory as a principled approach to aggregating engagement signals into an objective function that captures a desired notion of ``value''.  The resulting objective function can be optimized from data, serving as a plug-in replacement for the ad-hoc objectives typically used in engagement optimization frameworks.

2. Our approach is based on the creation of a latent variable model that relates value to various observed engagement signals. We devise a new identification strategy for the latent variable model tailored to the intended use case of online recommendation systems. Our identification strategy needs only a single robust engagement signal for which we know the conditional probability of value given the signal.

3. We implemented our approach on the Twitter platform on millions of users. In line with an established validity framework for measurement theory, we conduct a qualitative analysis of how well our model captures ``value''.

\subsection{Measurement theory and latent variable models}
The framework of \emph{measurement theory}~\citep{hand2004measurement,jackman2009} is widely used in the social sciences as a guide to measuring \emph{unobservable theoretical constructs} like ``quality of life'',  ``political ideology'', or ``socio-economic status''. Under the measurement approach, theoretical constructs are operationalized as latent variables, which are related to observable data through a latent variable model (LVM). 

Similarly, we treat the ``value'' of a recommendation as a theoretical construct, which we operationalize as a (binary) latent variable $V$. We represent the LVM as a a \emph{Bayesian network} ~\citep{pearl2009causality} that contains $V$ as well as each of the possible types of user engagements (clicks, shares, etc). The structure of the Bayesian network allows us to specify conditional independences between variables, enabling us to capture dependencies like e.g. needing to click an item before replying to it.

Under the measurement approach, the ideal objective becomes clear: $\prob(V=1 \mid \Bhvrs)$ - the probability the user values the item given their engagements with it. Such an objective uses all engagement signals, but only insofar provide evidence of Value $V$. If we can identify $\prob(V=1 \mid \Bhvrs)$, then it can be used as a drop-in replacement for any objective that scores items based on engagement signals.


Our key insight is that
we can identify $\prob(V \mid \Bhvrs)$ --- the probability of Value given \emph{all} behaviors --- through the use of a single \emph{anchor variable} $A$  for which we know $\prob(V = 1 \mid A = 1)$. The anchor variable, together with the structure of the Bayesian network, is what  gives ``value'' its meaning. Through the choice of the anchor variable and the structure of the Bayesian network, the designer has the flexibility to give ``value'' subtly different meanings.

Recommendation engines have natural candidates for anchor variables: strong, explicit feedback from the user. For example, strong negative feedback could include downvoting or reporting a content item, or blocking another user. Strong positive feedback could be explicitly liking or upvoting an item. For negative feedback, we  make the assumption that $\prob(V = 1 \mid A = 1) = \epsilon$ for $\epsilon \approx 0$, while for positive feedback we make the assumption that  $\prob(V = 1 \mid A = 1) = 1 - \epsilon$.

\subsection{A case study on the Twitter platform}

We implemented our approach on the Twitter platform on millions of users. On Twitter, there are numerous user behaviors: clicks, favorites, retweets, replies, and many more. It would be difficult to directly specify an objective that properly trades-off all these behaviors. Instead, we identify a natural anchor variable. On Twitter, users can give explicit feedback on tweets by clicking ``See less often'' (SLO) on them. We use SLO as our anchor and assume that the user does not value tweets they click ``See less often'' on. After specifying the anchor variable and the Bayesian network, we are able to learn $\prob(V \mid \Bhvrs)$ from data.

The model automatically learns a natural ordering of which behaviors should provide stronger evidence for Value $V$, e.g. $\prob(V = 1 \mid \mathtt{Retweet} = 1) > \prob(V = 1 \mid \mathtt{Reply} = 1) > \prob(V = 1 \mid \Click = 1)$. Furthermore, it learns complex inferences about the evidence provided by \emph{combinations} of behavior. Such inferences would not be possible under the standard approach, which uses a linear combination of behaviors as the objective.

Unlike other work on recommender systems, we do not evaluate through engagement metrics. If we believe that engagement is not the same as the construct ``value'', then we cannot evaluate our approach merely by reporting engagement numbers. Instead, we must take a more holisitc approach. We discuss established approaches to assessing the \emph{validity} \citep{standards2014,messick1987validity,reeves2016contemporary} of a measurement, and explain how they translate to the recommender system setting by using Twitter as an example.
\section{Related work}
In the social sciences, especially in psychology, education, and political science, measurement theory ~\citep{hand2004measurement} has long been used to operationalize constructs like ``personality'', ``intelligence'', ``political ideology'', etc. Often the operationalization of such constructs is heavily contested, and many types of evidence for validity and reliability are used to evaluate the match between a construct and its operationalization ~\citep{messick1987validity,standards2014}.

Recently, \citet{jacobs2019measurement} introduced the language of measurement in the context of computer science. They argue that many harms effected by computational systems are the direct result of a mis-match between a theoretical construct and its operationalization. In the context of recommender systems, many have argued that the engagement metrics used in practice are a poor operationalization of ``value'' \citep{ekstrand2016behaviorism}. 

We use measurement theory as a principled way to disentangle latent value from observed engagement. We provide a general latent variable model approach in which an \emph{anchor variable} provides the key link between the latent variable and the observed behaviors. The term anchor variable has been used been used in various ways in prior work on factor models \citep{arora2012learning,arora2013practical,halpern2016electronic,halpern2016clinical}; our usage is most similar to \citep{halpern2016clinical}. Our use of the anchor variable is also similar to the use of a \emph{proxy variable} to identify causal effects under unobserved confounding \citep{pearl2010measurement,kuroki2014measurement}. 

\section{Identification of the LVM with anchor $A$} \label{sec:identification}
We now describe our general approach to operationalizing a target construct through a latent variable model (LVM) with an \emph{anchor variable}. We operationalize the construct for value through a LVM in which the construct is represented through an unobserved, binary latent variable $V$ that the other binary, observed behaviors provide evidence for. We assume there is one observed behavior, an \emph{anchor variable} $A$, which we know $\prob(V = 1 \mid A = 1)$ for. We represent all other observed behaviors in the binary random vector $\mathbf{B} = (B_1, \dots, B_n)$. We refer to $A$ as an anchor variable because it will provide the crucial link to identifying $\prob(V \mid A, \mathbf{B})$. In other words, it will \emph{anchor} the other observed behaviors $\mathbf{B}$ to Value $V$.

We represent the LVM as a Bayesian network. A Bayesian network is a directed acyclic graph (DAG) that graphically encodes a factorization of the joint distribution of the variables in the network. In particular, the DAG encodes all conditional independences among the nodes through the $d$-separation rule~\citep{pearl2009causality}. This is important because in most real-world settings, the observed behaviors have complex dependencies among each other (e.g. one may need to click on an item before replying to it). Through our choice of the DAG we can model both the dependencies among the observed behaviors as well as the dependence of the unobserved variable $V$ on the observed behaviors.

Our goal is to determine $\prob(V \mid A, \mathbf{B})$ so that it can later be used downstream as a target for  optimization. We now discuss sufficient conditions for identifying the conditional distribution $\prob(V \mid A, \mathbf{B})$. There are three assumptions on the anchor variable $A$ that we will consider in turn.

\emph{Notation.} We use $\Pa(X)$ to denote the parents of a node $X$ and use $\Pa_{-V}(X) = \Pa(X)\setminus V$ to denote all parents of $X$ except for $V$.

\begin{assumption}[Value-sensitive] \label{as:value-sensitive}
For every realization $b$ of the random vector $\mathbf{B}$, we have that $\prob(A=1 \mid \mathbf{B} = b, \V = 1) \neq \prob(A=1 \mid \mathbf{B} = b, \V = 0)$.
\end{assumption}

Assumption \ref{as:value-sensitive} simply means that the anchor $A$ carries signal about Value $V$, regardless of what the other variables $\mathbf{B}$ are.\footnote{When combined with Assumption \ref{as:no-children}, Assumption \ref{as:value-sensitive} simplifies to the condition $\prob(A=1 \mid \pa_{-V}(A)=z, \V = 1) \neq \prob(A=1 \mid \pa_{-V}(A)=z, \V = 0)$ for every realization $z$ of $\pa_{-V}(A)$, the parents of $A$ excluding $V$.}
\begin{assumption}[No children] \label{as:no-children}
The anchor variable $A$ has no children. 
\end{assumption}

Since the anchor $A$ is chosen to be a strong type of explicit feedback, it is usually the last type of behavior the user engages in on a content item (e.g. a ``report'' button that removes the content from the user's timeline), and thus, it typically makes sense to model $A$ as having no children.

\begin{assumption}[One-sided conditional independence] \label{as:one-sided-ci}
Let $\pa_{-V}(A)$ be all parents of $A$ excluding $V$. Value $V$ is independent from $\pa_{-V}(A)$ given that $A=1$:
\begin{equation*}
    \prob(V = 1 \mid A = 1, \pa_{-V}(A)) = \prob(V = 1 \mid A = 1) \,.
\end{equation*} 
\end{assumption}

Assumption \ref{as:one-sided-ci} means that when the user has opted to give feedback ($A=1$), the level of information that feedback contains about Value $V$ does not depend on the other parents of $A$. The assumption rests on the fact that $A$ is a strong type of feedback that the user only provides when they are confident of their assessment. 

\subsection{Conditions for identification}
The next theorem establishes that under A\ref{as:value-sensitive}, the distribution of observable behaviors $\prob(A, \mathbf{B})$ and the conditional distribution $\prob(A \mid V, \mathbf{B})$ are sufficient for identifying the conditional distribution, $\prob(V \mid  A, \mathbf{B})$. The proof uses a \emph{matrix adjustment method} (\citeauthor{rothman2008modern}, \citeyear{rothman2008modern}; pg. 360) and is very similar to that in \citet{pearl2010measurement,kuroki2014measurement}.

\begin{theorem} \label{thm:obs-data-suff}
Let $\V$ and $A$ be binary random variables and let $\mathbf{B} = (B_1, \dots, B_n)$ be a binary random vector. If A\ref{as:value-sensitive} holds, then the distributions $\prob(A, \mathbf{B})$ and $\prob(A \mid V, \mathbf{B})$ uniquely identify the conditional distribution $\prob(\V \mid  A, \mathbf{B})$.
\end{theorem}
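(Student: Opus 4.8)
The plan is to fix an arbitrary realization $b$ of $\mathbf{B}$ lying in the support (so that $\prob(\mathbf{B} = b) > 0$ and all conditionals below are well defined) and to reduce the identification to a small linear-algebraic problem over the two binary variables $A$ and $\V$. First I would apply Bayes' rule to write, for each $a, v \in \{0,1\}$,
\[
\prob(\V = v \mid A = a, \mathbf{B} = b) = \frac{\prob(A = a \mid \V = v, \mathbf{B} = b)\,\prob(\V = v \mid \mathbf{B} = b)}{\prob(A = a \mid \mathbf{B} = b)}\,.
\]
The first factor in the numerator is supplied by the given distribution $\prob(A \mid \V, \mathbf{B})$, and the denominator is supplied by the given distribution $\prob(A, \mathbf{B})$. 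Hence the only quantity not immediately available is the latent marginal $\prob(\V = v \mid \mathbf{B} = b)$, and identifying the full conditional collapses to identifying this single two-component vector.

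Next I would recover $\prob(\V \mid \mathbf{B} = b)$ through the linear system furnished by the law of total probability. Writing $q_v = \prob(\V = v \mid \mathbf{B} = b)$, $p_a = \prob(A = a \mid \mathbf{B} = b)$, and $\alpha_v = \prob(A = 1 \mid \V = v, \mathbf{B} = b)$, the identity $p_a = \sum_v \prob(A = a \mid \V = v, \mathbf{B} = b)\, q_v$ becomes $\mathbf{p} = M\mathbf{q}$ with
\[
M = \begin{pmatrix} 1 - \alpha_0 & 1 - \alpha_1 \\ \alpha_0 & \alpha_1 \end{pmatrix}\,.
\]
Here $M$ is column-stochastic and entirely known from $\prob(A \mid \V, \mathbf{B})$, while $\mathbf{p}$ is known from $\prob(A, \mathbf{B})$; the unknown is $\mathbf{q}$. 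So the task is to solve this $2 \times 2$ system uniquely.

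The crux is the invertibility of $M$. A direct computation gives $\det M = \alpha_1 - \alpha_0$, which is precisely $\prob(A = 1 \mid \mathbf{B} = b, \V = 1) - \prob(A = 1 \mid \mathbf{B} = b, \V = 0)$. Assumption \ref{as:value-sensitive} states exactly that this difference is nonzero for every realization $b$, so $M$ is invertible and $\mathbf{q} = M^{-1}\mathbf{p}$ is uniquely determined. Substituting the recovered $\mathbf{q}$ back into the Bayes expression above yields $\prob(\V \mid A, \mathbf{B} = b)$ uniquely, and since $b$ was arbitrary this pins down $\prob(\V \mid A, \mathbf{B})$ on the whole support, completing the argument.

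I do not expect a serious obstacle here: the single substantive insight is recognizing that identification reduces to inverting this $2 \times 2$ mixing matrix and that the value-sensitivity hypothesis is exactly the nondegeneracy condition $\det M \neq 0$ guaranteeing a unique solution; everything else is routine bookkeeping. The only point requiring mild care is restricting attention to realizations $b$ in the support, so that the marginal $\prob(A \mid \mathbf{B} = b)$ and the conditionals entering $M$ are genuinely defined.
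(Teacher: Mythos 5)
Your proof is correct and takes essentially the same approach as the paper: both reduce identification to inverting the $2\times 2$ matrix of conditionals $\prob(A \mid V, \mathbf{B}=b)$ via the law of total probability, with Assumption~\ref{as:value-sensitive} guaranteeing a nonzero determinant (the paper phrases the unknown as the joint $\prob(\mathbf{B}=b, V)$ rather than your conditional $\prob(V \mid \mathbf{B}=b)$, a cosmetic difference). Your explicit attention to restricting to realizations $b$ in the support is a minor point of care the paper leaves implicit.
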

\begin{proof}
    Since the conditional distribution $\prob(\V \mid  A, \mathbf{B})$ is equal to $ \frac{\prob(\mathbf{B}, V) \cdot \prob(A \mid  \mathbf{B}, V)}{\prob(A, \mathbf{B})}$, we can reduce the problem to determining the distribution $\prob(\mathbf{B}, V)$. We can relate $\prob(\mathbf{B}, V)$ to the given distributions, $\prob(A, \mathbf{B})$ and $\prob(A \mid \mathbf{B}, V)$, via the law of total probability:
    \begin{align} \label{eq:non-matrix}
        \prob(A, \mathbf{B}) = \sum_{v\in \{0,1\}} \prob(\mathbf{B}, V=v) \prob(A \mid \mathbf{B}, V=v) \,.
    \end{align}
    For every realization $b$ of the random vector $\mathbf{B}$, we can write Equation \ref{eq:non-matrix} as $z^b = \mathbf{P}^{b} \mu^{b}$ where the matrix $\mathbf{P}^{b} \in [0, 1]^{2 \times 2}$ and the vectors $\mu^{b}, z^{b} \in [0, 1]^2$ are defined as
    \begin{align*}
        & \mathbf{P}^{b}_{i,j} = \prob(A = i \mid \mathbf{B} = b, V = j) ~\text{ for } ~i, j \in \{0, 1\}\,, \\
       & \mu^{b} = [\prob(\mathbf{B} = b, V = 0), \prob(\mathbf{B} = b, V = 1)]^T\,, \\
       & z^{b} = [\prob(\mathbf{B} = b, A = 0), \prob(\mathbf{B} = b, A = 1)]^T \,.
    \end{align*}
    Determining the distribution $\prob(\mathbf{B}, V)$ is equivalent to determining $\mu^{b}$ for all $b$. By Assumption \ref{as:value-sensitive}, for all $b$ we have $\prob(A = 1 \mid B = b, V = 1) \neq \prob(A = 1 \mid B = b, V = 0)$, which implies that the determinant of the matrix $\mathbf{P}^{b}$ is non-zero. Therefore, for all $b$, the vector $\mu^{b}$ is equal to $\mu^{b} = (\mathbf{P}^{b})^{-1} z^{b}$. Thus, $\prob(\mathbf{B}, V)$, and therefore the conditional distribution $\prob(\V \mid  A, \mathbf{B})$, is identified by the given distributions.
\end{proof}

If we add Assumption \ref{as:no-children}, i.e. the anchor $A$ has no children, then the distributions $\prob(A, \mathbf{B})$ and $\prob(A \mid \Pa(A))$ are sufficient to identify $\prob(V \mid A, \mathbf{B})$.

\begin{corollary} \label{cor:parents}
If the joint distribution $\prob(V, A, \mathbf{B})$ is Markov\footnote{A distribution $\prob(X_1, \dots, X_n)$ is said to be Markov with respect to a DAG $G$ if it factorizes according to $G$, i.e. $\prob(X_1, \dots, X_n) = \prod_{i \in [n]} \prob(X_i \mid \Pa(X_i))$.} with respect to a DAG $G$  in which A\ref{as:value-sensitive} and A\ref{as:no-children} hold, then the distributions  $\prob(A, \mathbf{B})$ and $\prob(A \mid \Pa(A))$ uniquely identify the conditional distribution $\prob(V \mid A, \mathbf{B})$.
\end{corollary}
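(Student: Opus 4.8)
The plan is to reduce Corollary \ref{cor:parents} to Theorem \ref{thm:obs-data-suff}, which already does the substantive work. Theorem \ref{thm:obs-data-suff} tells us that, under A\ref{as:value-sensitive}, the pair $\prob(A, \mathbf{B})$ and $\prob(A \mid V, \mathbf{B})$ suffices to identify $\prob(V \mid A, \mathbf{B})$. Since $\prob(A, \mathbf{B})$ is among the hypotheses, it is enough to show that the remaining input $\prob(A \mid V, \mathbf{B})$ can be recovered from $\prob(A \mid \Pa(A))$, the quantity Corollary \ref{cor:parents} actually provides. So the whole corollary rests on a single identity linking these two conditionals.

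The key observation is that A\ref{as:no-children} forces $A$ to be a sink in the DAG $G$. First I would note that $\Pa(A)$ is a subset of $\{V\} \cup \{B_1, \dots, B_n\}$, so any realization of $(V, \mathbf{B})$ determines a realization of $\Pa(A)$. Because $A$ has no children, it has no descendants other than itself, and hence every other node in $G$ is a non-descendant of $A$. By the local Markov property of a Bayesian network --- which is implied by the Markov factorization assumed in the statement, and can equivalently be read off via $d$-separation --- a node is conditionally independent of its non-descendants given its parents. Applying this to $A$ yields
\begin{equation*}
    \prob(A \mid V, \mathbf{B}) = \prob(A \mid \Pa(A)) \,.
\end{equation*}

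Having established this identity, the argument closes immediately. The right-hand side is exactly the distribution $\prob(A \mid \Pa(A))$ supplied by the hypotheses, so $\prob(A \mid V, \mathbf{B})$ is known. Together with the given $\prob(A, \mathbf{B})$ and the fact that A\ref{as:value-sensitive} holds in $G$, Theorem \ref{thm:obs-data-suff} then identifies $\prob(V \mid A, \mathbf{B})$, completing the proof.

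I expect the only real content --- and thus the main place to be careful --- to be the conditional independence step $\prob(A \mid V, \mathbf{B}) = \prob(A \mid \Pa(A))$. The subtlety is to argue cleanly that ``no children'' upgrades to ``no descendants,'' so that the entire collection $(V, \mathbf{B})$ consists of non-descendants of $A$ and the local Markov property can be invoked for all of them simultaneously; one should verify that conditioning on the non-parent coordinates of $(V, \mathbf{B})$ in addition to $\Pa(A)$ genuinely leaves the conditional law of $A$ unchanged. Everything else is bookkeeping that invokes Theorem \ref{thm:obs-data-suff} as a black box.
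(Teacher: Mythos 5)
Your proof is correct and takes essentially the same route as the paper: both reduce to Theorem \ref{thm:obs-data-suff} by establishing the identity $\prob(A \mid V, \mathbf{B}) = \prob(A \mid \Pa(A))$ under A\ref{as:no-children}. The only cosmetic difference is the justification of that identity --- you invoke the local Markov property (no children implies no descendants, so all of $(V,\mathbf{B})$ are non-descendants of $A$), while the paper invokes the Markov blanket of $A$ collapsing to $\Pa(A)$; these are interchangeable consequences of the Markov factorization.
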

\begin{proof}
In a Bayesian network, the \emph{Markov blanket} for a variable $X$ is the set of variables $\MB(X) \subseteq \mathcal{Z}$ that shield $X$ from all other variables $\mathcal{Z}$ in the DAG, i.e. $\prob(X \mid \mathcal{Z}) = \prob(X \mid \MB(X))$ \citep{pearl2009causality}. The Markov blanket for a variable $X$ consists of its parents, children, and parents of its children. Since the anchor $A$ has no children, $\prob(A \mid V, \mathbf{B}) = \prob(A \mid \MB(A)) = \prob(A \mid \Pa(A))$. Thus, by Theorem \ref{thm:obs-data-suff}, $\prob(A \mid \Pa(A))$, and $\prob(A, \mathbf{B})$ identify the conditional distribution $\prob(V \mid A, \mathbf{B})$
\end{proof}

Finally, when we add Assumption \ref{as:one-sided-ci}, one-sided conditional independence, then the distributions $\prob(V)$, $\prob(A, \mathbf{B})$, $\prob(V = 1 \mid A = 1)$, and $\prob(\Pa_{-V}(A) \mid V)$ are sufficient. The proof follows from Corollary \ref{cor:parents} because, under Assumption \ref{as:one-sided-ci}, the distributions $\prob(V = 1 \mid A = 1)$, $\prob(\Pa_{-V}(A) \mid V)$, and $\prob(V)$ identify $\prob(A \mid \Pa(A))$.

\begin{corollary} \label{cor:cond-ind}
If the joint distribution $\prob(V, A, \mathbf{B})$ is Markov with respect to a DAG $G$ in which A\ref{as:value-sensitive}-\ref{as:one-sided-ci} hold, then $\prob(V)$, $\prob(A, \mathbf{B})$, $\prob(V = 1 \mid A = 1)$, and $\prob(\Pa_{-V}(A) \mid V)$ uniquely identify the conditional distribution $\prob(V \mid A, \mathbf{B})$.
\end{corollary}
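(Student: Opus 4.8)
The plan is to reduce everything to Corollary \ref{cor:parents}. That corollary already shows that, under A\ref{as:value-sensitive}--A\ref{as:no-children}, the pair of distributions $\prob(A, \mathbf{B})$ and $\prob(A \mid \Pa(A))$ uniquely identifies $\prob(V \mid A, \mathbf{B})$. Since A\ref{as:value-sensitive}--A\ref{as:no-children} are assumed here and $\prob(A,\mathbf{B})$ is given, it suffices to recover the conditional $\prob(A \mid \Pa(A))$ from the listed distributions. First I would record that $V$ is in fact a parent of $A$: if it were not, then since $A$ has no children (A\ref{as:no-children}), the local Markov property would give $A \ci V \mid \Pa_{-V}(A)$, contradicting the value-sensitive condition in the form stated in the footnote to A\ref{as:value-sensitive}. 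Hence $\Pa(A) = \{V\} \cup \Pa_{-V}(A)$, and writing $Z = \Pa_{-V}(A)$ the target reduces to determining $\prob(A = 1 \mid V = v, Z = z)$ for each $v, z$ (the $A=0$ entry being its complement).

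The key step is a Bayes-rule rewrite arranged so that the one-sided independence is the only place A\ref{as:one-sided-ci} is invoked:
\[
\prob(A = 1 \mid V = v, Z = z) = \prob(V = v \mid A = 1, Z = z)\,\frac{\prob(A = 1 \mid Z = z)}{\prob(V = v \mid Z = z)}.
\]
By A\ref{as:one-sided-ci} the first factor collapses to $\prob(V = v \mid A = 1)$, which is pinned down by the given number $\prob(V = 1 \mid A = 1)$ and its complement. For the remaining factors I would use that $Z = \Pa_{-V}(A) \subseteq \mathbf{B}$, so the joint $\prob(A, Z)$ is a marginal of the given $\prob(A, \mathbf{B})$; this yields both $\prob(A = 1 \mid Z = z)$ and $\prob(Z = z)$. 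Finally $\prob(V = v \mid Z = z)$ is obtained by one more application of Bayes' rule, $\prob(V = v \mid Z = z) \propto \prob(Z = z \mid V = v)\,\prob(V = v)$ normalized over $v \in \{0,1\}$, where $\prob(Z = z \mid V = v)$ and $\prob(V=v)$ are exactly the given distributions $\prob(\Pa_{-V}(A) \mid V)$ and $\prob(V)$. Assembling these three pieces identifies $\prob(A \mid \Pa(A))$, and Corollary \ref{cor:parents} then completes the argument.

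The one genuine subtlety, and the step I would be most careful about, is the \emph{one-sidedness} of A\ref{as:one-sided-ci}: the assumed independence of $V$ from $\Pa_{-V}(A)$ holds only in the event $A = 1$, so any factorization that secretly required $\prob(V \mid A = 0, Z)$ or $\prob(A = 0 \mid V, Z)$ to be simplified via the assumption would be unjustified. The Bayes form above is chosen precisely to conditioning only on $A = 1$ wherever the assumption is applied, so the $A = 0$ quantities are never touched by A\ref{as:one-sided-ci} and are instead recovered purely by complementation. Everything else is routine manipulation of conditional probabilities and marginalization of $\prob(A, \mathbf{B})$ onto the coordinates in $Z$.
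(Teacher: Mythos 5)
Your proposal is correct and follows essentially the same route as the paper: both arguments reduce to Corollary \ref{cor:parents} by showing that the given distributions identify $\prob(A \mid \Pa(A))$, with Assumption \ref{as:one-sided-ci} invoked only under the event $A = 1$. The only differences are presentational --- the paper recovers the joint $\prob(A, V, \Pa_{-V}(A))$ by solving four linear equations per realization $w$ (with A\ref{as:one-sided-ci} pinning down $p_{w,1,1}$), whereas you write the conditional $\prob(A = 1 \mid V, \Pa_{-V}(A))$ directly as a Bayes-rule identity, which is algebraically the same computation --- and you additionally make explicit that $V \in \Pa(A)$ (via A\ref{as:value-sensitive}, A\ref{as:no-children}, and the local Markov property), a fact the paper uses implicitly when it equates $\prob(A, V, \Pa_{-V}(A))$ with $\prob(A, \Pa(A))$.
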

\begin{proof}
We will show that, under Assumption \ref{as:one-sided-ci}, the distributions $\prob(V = 1 \mid A = 1)$, $\prob(\Pa_{-V}(A) \mid V)$, and $\prob(V)$ identify $\prob(A \mid \Pa(A))$. The proof then follows from Corollary \ref{cor:parents}.

We show that we can identify $\prob(A \mid \Pa(A))$ by solving a set of linear equations. For short-hand let $p_{w,a,v} = \prob(\Pa_{-V}(A)=w, A=a, V=v)$. For any realization $w$, by marginalizing over $A$ and $V$, we can derive the following four equations for the four unknown probabilities $p_{w, 0, 0},\, p_{w, 0, 1},\, p_{w, 1, 0},\, p_{w, 1, 1}$:
\begin{align}
    & \prob(\Pa_{-V}(A)=w, A=0) =  p_{w,0,0} + p_{w,0,1} \label{eq:m=0} \\
    & \prob(\Pa_{-V}(A)=w, A=1) = p_{w,1,0} + p_{w,1,1} \label{eq:m=1} \\
    & \prob(\Pa_{-V}(A)=w, V=0) =  p_{w,0,0} + p_{w,1,0} \label{eq:v=0} \\
    & \prob(\Pa_{-V}(A)=w, V=1) =  p_{w,0,1} + p_{w,1,1} \label{eq:v=1}
\end{align}
Note that the LHS of Equations \ref{eq:m=0} and \ref{eq:m=1} are given by $\prob(A, \mathbf{B})$ and the LHS of Equations \ref{eq:v=0} and \ref{eq:v=1} are given by the prior $\prob(V)$ and $\prob(\pa_{-V}(A) \mid V)$.

From Assumption 3, one-sided conditional independence, we know that $\prob(V = 1 \mid A = 1, \Pa_{-V}(A)) = \prob(V = 1 \mid A = 1)$. Under one-sided conditional independence, the probability $p_{w,1,1}$ is determined by the given distributions:
\begin{align}
    p_{w,1,1} =~ & \prob(A = 1) \cdot \prob(\Pa_{-V}(A)=w\mid A=1) \nonumber \\ & \cdot \prob(V = 1 \mid A = 1, \Pa_{-V}(A)=w) \nonumber\\
    =~ & \prob(A=1)\cdot \prob(\Pa_{-V}(A)=w\mid A=1) \nonumber \\
    & \cdot \prob(V = 1 \mid A = 1)\,. \label{eq:pw11}
\end{align}
Since $p_{w,1,1}$ is determined by the given distributions, so are $p_{w, 0, 0}$, $p_{w, 1, 0}$, and $p_{w, 0, 1}$, which can be solved for through Equations \ref{eq:m=0}-\ref{eq:v=1}. Since this holds for any realization $w$, the distribution $\prob(A, V, \Pa_{-V}(A)) = \prob(A, \Pa(A))$ is determined, which by Collorary \ref{cor:parents} means that the conditional distribution $\prob(V \mid A, \mathbf{B})$ is determined.
\end{proof}

\subsection{Specifying the distributions for identification} \label{sec:spec-dists}
Corollary \ref{cor:cond-ind} establishes that, under Assumptions \ref{as:value-sensitive}-\ref{as:one-sided-ci}, the distributions $\prob(V)$, $\prob(A, \mathbf{B})$, $\prob(\Pa_{-V}(A) \mid V)$, and $\prob(A = 1 \mid V = 1)$ are sufficient to determine the conditional distribution $\prob(V \mid A, \mathbf{B})$ for the LVM. Where do we get these distributions?

\begin{enumerate}
    \item The distribution of observable nodes $\prob(A, \mathbf{B})$ is estimated by the empirical distribution of observed data.
    \item The distribution $\prob(V)$ over the latent variable for value $V$ is a prior distribution that is specified by the modeler. Recall that our goal with the LVM is to use $\prob(V = 1 \mid 
\mathbf{B}, A)$ as an objective to optimize. Since the prior $\prob(V)$ only has a scaling effect on $\prob(V = 1 \mid 
\mathbf{B}, A)$, it does not matter greatly. We set $\prob(V)$ to be uniform, i.e. $\prob(V=1)=0.5$. 

\item The conditional probability $\prob(V = 1 \mid A = 1)$ is specified by our assumption on the the anchor variable. The probability $\prob(V = 1 \mid A = 1)$ is set to $\epsilon$ where $\epsilon \approx 0$ if $A$ is explicit negative feedback or to $1-\epsilon$ if $A$ is explicit positive feedback. 

\item That leaves the distribution $\prob(\Pa_{-V}(A) \mid V)$. We estimate $\prob(\Pa_{-V}(A) \mid V)$ heuristically using two sources of historical data that vary in their distribution of Value $V$. Suppose we have access to a dataset of historical recommendations $\mathcal{D}_{R}$ that were sent to users at random, as well as a dataset of historical recommendations that were algorithmically chosen, $\mathcal{D}_{C}$. Both kinds of datasets are commonly available on recommender systems due to the prevalence of A/B testing which typically tests new algorithmic changes against a randomized baseline. The randomized and algorithmic datasets will have different distributions of valuable content, $\prob_{R}(V)$ and $\prob_{C}(V)$, and different distributions of observed behavior, $\prob_{R}(A, \mathbf{B})$ and $\prob_{C}(A, \mathbf{B})$. However, we assume that $\prob(A, \mathbf{B} \mid V)$, the probability of the observed behavior given Value $V$, is the same between the two datasets.\footnote{If our DAG has Value $V$ as a root node and can be interpreted as a causal Bayesian network \citep{pearl2009causality}, then this is equivalent to assuming that the difference between the datasets corresponds to an intervention on $V$.}  The following equations then hold:
\begin{align}
    \prob_R(\Pa_{-V}(A)) & = ~\prob(\Pa_{-V}(A) \mid V = 1)\prob_R(V=1) \nonumber \label{eq:random-data} \\
    & + \prob(\Pa_{-V}(A) \mid V = 0)\prob_R(V=0) \,, \\
    \prob_C(\Pa_{-V}(A)) & =  ~\prob(\Pa_{-V}(A) \mid V = 1)\prob_C(V=1) \nonumber \label{eq:algo-data} \\ 
    & + \prob(\Pa_{-V}(A) \mid V = 0)\prob_C(V=0)\,.
\end{align}
 We specify $\prob_R(V)$ and $\prob_C(V)$ in an application-dependent way, but, generally, we assume the randomized dataset is lower value than the algorithmic one: $\prob_R(V) < \prob_C(V)$. Once we specify $\prob_R(V)$ and $\prob_C(V)$ and estimate $\prob_{R}(A, \mathbf{B})$ and $\prob_{C}(A, \mathbf{B})$ empirically, then we can solve Equations \ref{eq:random-data} and \ref{eq:algo-data} to estimate $\prob(\Pa_{-V}(A) \mid V = 1)$. This is a heuristic approach that is appropriate for getting a rough estimate, but needs to be used with care. In practice, not all the differences between the randomized and algorithmic dataset can be explained by an intervention on Value $V$. For example, if the recommendation algorithm has historically been optimized for user clicks, then users in the algorithmic dataset may click on items more, but for reasons other than increased value.
\end{enumerate}

\subsection{Algorithm for identification}
We now give more details on how we calculate the joint distribution $\prob(V, A, \mathbf{B})$ given the distributions $\prob(V)$, $\prob(A, \mathbf{B})$, $\prob(V = 1 \mid A = 1)$ and $\prob(\pa_{-V}(A) \mid V)$. We use the structure of the Bayesian network to efficiently identify the joint distribution $\prob(V, A, \mathbf{B})$ by fitting each factor $\prob(X \mid \Pa(X))$ for every variable $X$. 

\begin{enumerate}
    \item The factor for $V$ is given by the prior $\prob(V)$.\footnote{Assuming that $V$ is a root node, which is the case in any network we are interested in.}
    \item The factor for $A$, i.e. $\prob(A \mid \Pa(A))$, can be identified from $\prob(V)$, $\prob(V = 1 \mid A = 1)$, and $\prob(\Pa_{-V}(A) \mid V)$ by solving a set of linear equations as in the proof of Corollary \ref{cor:cond-ind}.
    \item   The factor for any behavior that does not have $V$ as a parent is directly identified by the distribution of observable behaviors $\prob(A, \mathbf{B})$. 
    \item The factors for the remaining behaviors which have $V$ as a parent are fit through a \emph{matrix adjustment method} (\citeauthor{rothman2008modern}, \citeyear{rothman2008modern}; pg. 360). In particular, note that
    {\small
    \begin{align*}
        & \prob(X=1, \Pa_{-V}(X) = z, \Pa_{-V}(A)=w, A = a) = \\
        &  ( \sum_{v \in \{0, 1\} }\prob(A = a \mid \Pa_{-V}(A) = w, V=v) \\
        & \cdot \prob(X=1, \Pa_{-V}(X)=z, \Pa_{-V}(A) = w, V=v) )
    \end{align*}
    }%
    
    We can also write the above equation in matrix form. Let $z_1, \dots, z_m$ be all realizations of $\Pa_{-V}(X)$, and define the matrices $Q^{w} \in [0,1]^{2\times m}$, $R^{w}\in [0,1]^{2\times2}$, $S^{w} \in [0,1]^{2\times m}$ as\footnote{If $\Pa_{-V}(X) \cap \Pa_{-V}(A) \neq \emptyset$ and $\Pa_{-V}(X)=z_i$ and $\Pa_{-V}(A)=w$ conflict, then simply set $Q^{w}_{0,i} = Q^{w}_{1, i} = 0$.} 
    {\small \begin{align}
        & Q^{w}_{a,i} = \prob(X=1, \Pa_{-V}(X) = z_i, \\ 
        &
        \qquad \qquad \Pa_{-V}(A)=w, A = a),\nonumber \\
        & R^{w}_{av} = \prob(A = a \mid \Pa_{-V}(A) = w, V=v), \\
        & S^{w}_{v,i} = \prob(X=1, \Pa_{-V}(X)=z_i, \\
        & \qquad \qquad \Pa_{-V}(A) = w, V=v). \nonumber
    \end{align}
    }%
    
    Then, $Q^{w} = R^{w}S^{w}$ and $S^{w} = (R^{w})^{-1}Q^{w}$.\footnote{$R^{w}$ is invertible because of Assumption \ref{as:value-sensitive}.} Let $S$ be the marginalization over $w$: $\sum_{w} S^{w} = (R^{w})^{-1}Q^{w}$. Then $S_{v,i} = \prob(X=1, \Pa_{-V}(X)=z_i, V=v)$. Thus, the factor for $X$ is equal to $\prob(X \mid \Pa_{-V}(X)=z_i, V=v) = S_{v,i}/\prob(\Pa_{-V}(X)=z_i, V=v)$. We fit nodes with $V$ as a parent in topological order, so that we can always calculate the denominator from previously fit factors.
    
\end{enumerate}

\begin{figure*}[t]
    \centering
    \includegraphics[width=\columnwidth]{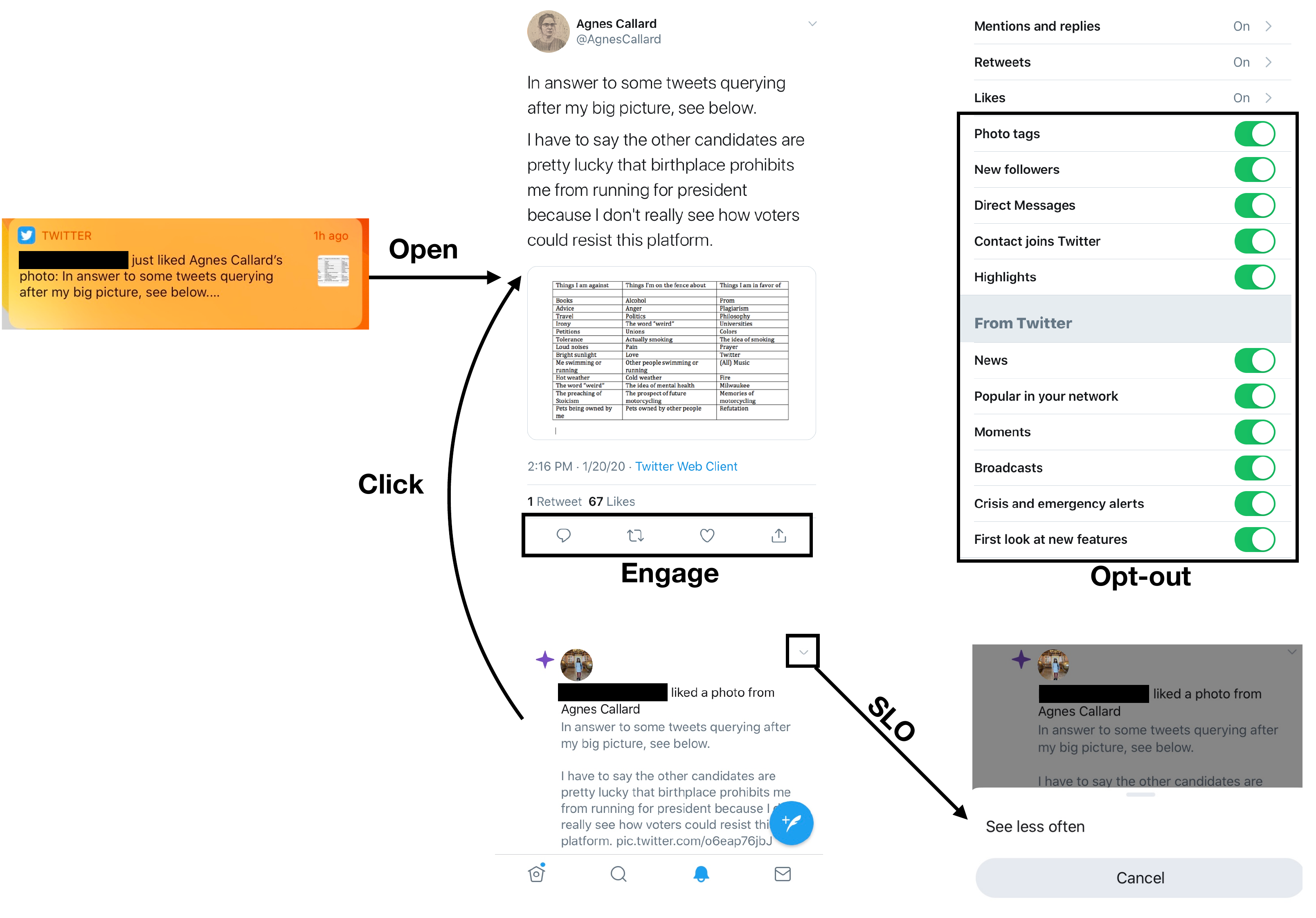}
    \caption{A workflow of how users can interact with ML-based notifications on Twitter. To view the tweet, the user can either ``open'' the notification from the home screen on their phone or ``click'' on it from the notifications tab within the app. If the user sees the tweet from their notifications tab, they can also click "See Less Often" on it. Once the user has opened or clicked on the notification, they can engage with the tweet in many ways, e.g. replying, retweeting, or favoriting. At any point, the user can opt-out of notifications all-together.}
    \label{fig:mr_workflow}
\end{figure*}

\begin{figure*}[t]
\centering
\includegraphics[width=0.7\textwidth]{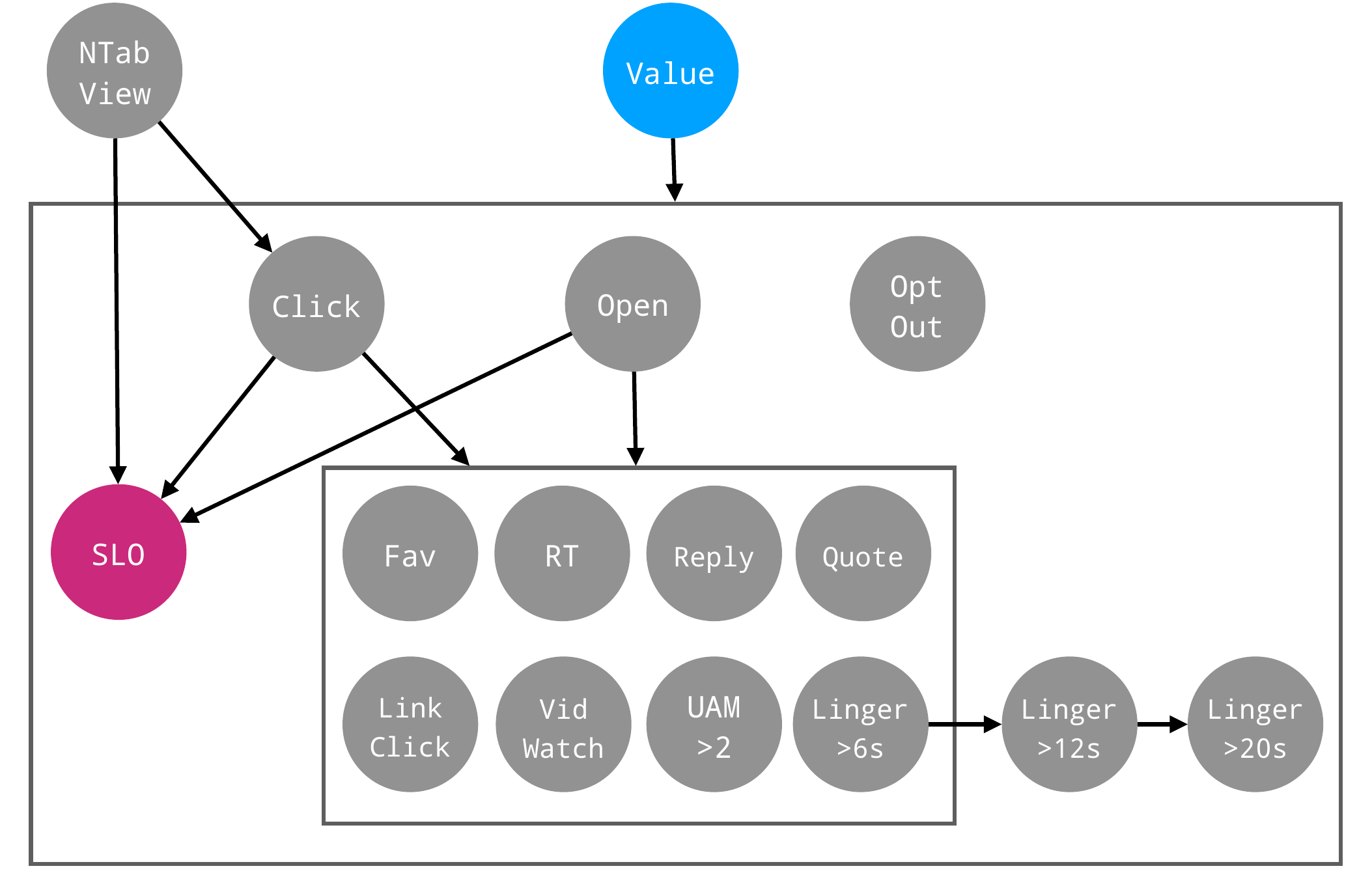}
\caption{Bayesian network for Twitter notifications. An arrow from a node $X$ to a box means that the node $X$ is a parent of all the nodes in the box, e.g. $\Click$ and $\Open$ are parents of $\Fav$, $\RT$, ..., $\Linger{6}$. The latent variable $\Value$ is a parent of everything except $\Ntab$. The measurement node $\SLO$  is highlighted in pink.} \label{fig:bnet}
\end{figure*}

\section{Application to Twitter} \label{sec:twitter-case-study}
We implemented our approach on the Twitter platform on millions of users. On Twitter, there are many kinds of user behaviors: clicks, replies, favorites, retweets, etc. The typical approach to recommendations would involve optimizing an objective that trades-off these behaviors, usually with linear weights. However, designing an objective is a non-trivial problem. How exactly should we weigh favorites compared to clicks or replies or retweets or any of the numerous other behaviors? It is difficult to assess whether the weights we chose match the notion of ``value'' we intended.

Furthermore, even supposing that we could manually specify the ``correct'' weights through laborious trial-and-error, the correct weights change over time. For example, after videos shared on Twitter began to auto-play, the signal of whether or not a user watched a video presumably became less relevant. The reality is that the objective is never static - how users interact with the platform is constantly changing, and the objective must change accordingly.

Our approach provides a principled solution to objective specification. We directly operationalize our intended construct ``value'' as a latent variable $V$. The meaning of Value $V$ is defined by the Bayesian network and the \emph{anchor variable} $A$, a behavior that we believe provides strong evidence for value or the lack of it. On Twitter, the user can provide strong, explicit feedback by clicking ``See less often'' (SLO) on a tweet. We use SLO as our anchor $A$ and assume that if a user clicks "See less often" on a tweet, they do not value it:  $\prob(V = 1 \mid \SLO = 1) = 0$.

Under this approach, there is no need to manually specify how all the behaviors should factor into the objective. Having operationalized Value, the ideal objective to use is clear: $\prob(V = 1 \mid \mathbf{B}, A)$ - the probability of Value $V$ given the observed behaviors. As discussed in Section \ref{sec:twitter-case-study}, we can directly estimate $\prob(V = 1 \mid \mathbf{B}, A)$ from data. Furthermore, presuming that the anchor and structure of Bayesian network remain stable, we can regularly re-estimate the model with new data at any point, allowing us to account for change in user behavior on the platform.

\textbf{The Bayesian network.}
We applied our approach to ML-driven notifications on Twitter. These notifications have various forms, e.g. "Users A, B, C just liked User Z's tweet", "User A just tweeted after a long time", or "Users A, B, C followed User Z". Figure \ref{fig:mr_workflow} shows an example notification and how a user can interact with it. The Bayesian network in Figure \ref{fig:bnet} succinctly encodes the dependencies between different types of interactions users can have with notifications.\footnote{The network can be interpreted as a \emph{causal} Bayesian network \citep{pearl2009causality}, although for our purposes, we do not strictly need the causal interpretation.}

Notifications are sent both to the user's home screen on their mobile phone, as well as to the notifications tab within the Twitter app. The user can start their interaction either by seeing the notification in their notification tab ($\Ntab$), and then clicking on it ($\Click$), or by seeing it as a the notification on their phone home screen and opening it from there directly ($\Open$). After clicking or opening the notification, the user can engage in many more interactions: they can favorite ($\Fav$), retweet ($\RT$), quote retweet ($\Quote$), or reply ($\Reply$) to the tweet; if the tweet has a link, they can click on it ($\LC$); if it has a video, they can watch it ($\VW$). In addition, other implicit signals are logged: whether the amount the user lingered on the tweet exceeds certain thresholds ($\Linger{6}$, $\Linger{12}$, $\Linger{20}$) and whether the number of user active minutes ($\UAM$) spent in the app after clicking/opening the notification exceeds a threshold.

Furthermore, when the user is in the notification tab, the user can provide explicit feedback on a particular notification by clicking "See Less Often" ($\SLO$) on it. Notably, unlike other types of behavior, the user does not need to actually click or open the notification before clicking SLO. However, we found empirically that users are more likely to click SLO after clicking or opening the notification, probably because they need to gain more information before making an assessment. Thus, in addition to $\Ntab$, we also model $\Click$ and $\Open$ as parents of $\SLO$.

Finally, at any time the user can opt-out of notifications to their phone home screen ($\OptOut$). When the user decides to opt-out, it is attributed to any ML-based notification saw within a day of choosing to opt-out. Since ML-based notifications are relatively rare on Twitter (users usually get less than one a day), there are usually at most one or two notifications attributed to an opt-out event.

We model the latent variable $V$ as being a parent of all behaviors except $\Ntab$ (whether or not the user saw the notification in their notifications tab or not). Since users may check their notifications tab for many other notifications, it is difficult to attribute $\Ntab$ to a particular notification, and so we consider it to be an exogenous, random event.

\textbf{Identifying the joint distribution} We fit our model on three days of data that contained of all user interactions with ML-based push notifications on Twitter. In Section \ref{sec:identification}, we proved that the target objective - the conditional distribution $\prob(V = 1 \mid \mathbf{B}, A)$ - is uniquely identified from $\prob(V = 1 \mid \Msr = 1)$, $\prob(V)$, $\prob(\mathbf{B}, A)$, and $\prob(\Pa_{-V}(A) \mid A)$ (see Corollary \ref{cor:cond-ind}). We set the four distributions as follows. We used $\SLO$ as our anchor variable $\Msr$ and assumed that $\prob(V = 1 \mid \Msr = 1) = 0$, i.e. a user never says ``See less often'' if they value the notification. The prior distribution of value $\prob(V)$ was set to be uniform. The distribution of observed behaviors $\prob(\mathbf{B}, A)$ was set to the empirical distribution. The distribution $\prob(\Pa_{-V}(A) \mid V)$ was estimated as described in Section \ref{sec:spec-dists} by using two sources of historical data, one in which notifications were sent at random and the other in which notifications were sent according to a recommendation algorithm.\footnote{We assume that the dataset of randomized notifications has a prior probability $\prob_{R}(V=1) = 0$ and the dataset of algorithmically chosen notifications has a prior probability $\prob_C(V=1) = 0.5$.}

\begin{table*}[t]
    \centering
    {\renewcommand{\arraystretch}{1.3}
    \begin{tabular}{|l|c|c|c|} \multicolumn{4}{c}{$\prob(V=1 \mid \mathtt{Behavior}=1)$} \\
        \bottomrule
          $\mathtt{Behavior}$ & Naive Bayes & $\Click, \Open \nrightarrow \SLO$ & Full Model \\
          \bottomrule
         $\OptOut$ & 0 & 0 & 0 \\ \hline
         $\Click$ & 0 & 0.316 & 0.652 \\ \hline
         $\Open$ & 0 & 0.442 & 0.685 \\ \hline
         $\UAM$ & 0 & 0.157 & 0.719 \\ \hline
         $\VW$ & 0 & 0.254 & 0.772\\ \hline
         $\Linger{6}$ & 0 & 0.264 & 0.802 \\ \hline
         $\LC$ & 0 & 0.320 & 0.836 \\ \hline
         $\Reply$ & 0.358 & 0.570 & 0.932 \\ \hline
         $\Linger{12}$ & 0 & 0.245 & 0.948\\ \hline
         $\Fav$ & 0.579 & 0.672 & 0.949 \\ \hline
         $\RT$ & 0.680 & 0.720 & 0.956\\ \hline
         $\Linger{20}$ & 0.019 & 0.296 & 0.991 \\
         \hline
         $\Quote$ & 1.0 & 1.0 & 1.0 \\
         \bottomrule
    \end{tabular}
    }
    \caption{The inferences made by LVMs with different DAGs. For each model and for each behavior, we list $\prob(V = 1 \mid \Bhvr = 1)$ -- how much evidence the model learns that a behavior provides for Value $V$ (when all other behaviors are marginalized over).}
    \label{tab:bnet_infs}
\end{table*}

\textbf{Evaluation of internal structure.} Assessing our measure of ``value'' for validity will necessarily be an on-going and multi-faceted process. We do not, as typical of papers on recommendation, report engagement metrics. The reason is that if we expect our measure of ``value'' to differ from engagement, we cannot evaluate it by simply reporting engagement metrics. The evaluation of a measurement necessitates a more holistic approach. In Section \ref{sec:validity}, we describe the five categories of evidence for validity described by the \emph{Standards for educational and psychological testing}, the handbook considered the gold standard on approaches to testing \citep{standards2014}.

Here, we focus on evaluating what is known as \emph{evidence based on internal structure}, i.e whether expected theoretical relationships between the variables in the model hold. To justify why the structure of our Bayesian network is necessary, we compare our full model from Figure \ref{fig:bnet} to two other models: a naive Bayes model and the full model but without arrows from $\Open$ and $\Click$ to $\SLO$. In Table \ref{tab:bnet_infs}, we show $\prob(V = 1 \mid \Bhvr = 1)$ for all behaviors and models. As noted by prior work \citep{pearl2009causality,halpern2016clinical}, matrix adjustment methods  can result in negative values when conditional independence assumptions are not satisfied. To address this, we clamp all inferences to the interval $[0, 1]$. We include the table of non-clamped inferences in the appendix (Table \ref{tab:bnet_infs_unclamped}).

The first, simple theoretical relationship we expect to hold is that compared to observing no user interaction, observing any user behavior besides opt-out should increase the probability that the user values the tweet, i.e. $\prob(V = 1 \mid \Bhvr = 1) < \prob(V = 1) = 0.5$ for all $\Bhvr \neq \OptOut$. Furthermore, we also expect some behaviors to provide stronger signals of value than others, e.g. that  $\prob(V = 1 \mid \Fav = 1) > \prob(V = 1 \mid \Click = 1)$. 

The first model is the naive Bayes model, which simply assumes that all behaviors are conditionally independent given Value $V$. It does extremely poorly - almost all inferences have negative values and are clamped to zero, indicating that the conditional independence assumptions are unrealistic.

The second model is the full model except without arrows from $\Click$ and $\Open$ to $\SLO$. It models all pre-requisite relationships between behaviors, i.e. if a behavior $X$ is required for another behavior $Y$, then there is an arrow from $X$ to $Y$. Compared to the naive Bayes model, the second model does not make mainly negative-valued inferences, indicating that its conditional independence assumptions are more realistic. However, relative to the prior, most behaviors actually reduce the probability of $\Value$, rather than increase it!

After investigation, we realized that although users were not technically \emph{required} to click or open the notification before clicking SLO, in practice, they were more likely to do so, probably because they needed to gain information before making an assessment. We found that explicitly modeling the connection, i.e. adding arrows from $\Click$ and $\Open$ to $\SLO$ was critical for making reasonable inferences. We believe this takeaway will apply across recommender systems. The user never has perfect information and may need to engage with an item before providing explicit feedback \citep{wen2019leveraging}. It is important to model the relationship between information-gaining behavior and explicit feedback in the Bayesian network.

Our full model satisfies the theoretical relationships we expect. All the behaviors that we expect to increase the probability of Value $V$ do indeed do so. Furthermore, the relative strength of different types of behavior seems reasonable as well, e.g. $\prob(V = 1 \mid \Fav = 1)$ and  $\prob(V= 1 \mid \RT = 1)$ are higher than $\prob(V = 1 \mid \VW = 1)$ and $\prob(V = 1 \mid \LC = 1)$.

The full model also makes more nuanced theoretical inferences. Recall that $\UAM$ is whether or not the user had high user active minutes after either clicking the notification from notifications tab or by opening the notification from their phone home screen. The model learns that $\UAM$ is a highly indicative signal after $\Open$, but not after $\Click$: $\prob(V = 1 \mid \Open = 1, \UAM = 1) = 0.906$ and $\prob(V = 1 \mid \Click = 1, \UAM = 1) = 0.641$. This makes sense because if the user clicks from notifications tab, it means they were already in the app, and it is difficult to attribute their high UAM to the notification in particular. On the other hand, if the user enters the app because of the notification, it is much more direct of an attribution.

It is clear that manually specifying the inferences our model makes would be very difficult. The advantage of our approach is that after specifying (a) the anchor variable and (b) the Bayesian network, we can automatically learn these inferences from data. Further, the model is able to learn complex inferences (e.g. that $\UAM$ is more reliable after $\Open$ than $\Click$) that would be impossible to specify under the typical linear weighting of behaviors.
\section{Assessing validity} \label{sec:validity}
Thus far, we have described our framework for designing a measure of ``value'', which can be used as a principled replacement for the ad-hoc objectives ordinarily used in engagement optimization. How do we evaluate such a measure?  Notably, we do not advocate evaluating the measure purely through engagement metrics. If we expect our measure of ``value'' to differ from engagement, then we cannot evaluate it by simply reporting engagement metrics. Instead, the assessment of any measure is necessarily an ongoing, multi-faceted, and interdisciplinary process. 

To complete the presentation of our framework, we now discuss approaches to assess the \emph{validity} \citep{messick1987validity,standards2014,reeves2016contemporary} of a measurement. In the most recent (\citeyear{standards2014}) edition of the \emph{Standards for educational and psychological testing}, the handbook considered the gold standard on approaches to testing, there are five categories of evidence for validity \citep{standards2014}. We visit each in turn, and describe how they translate to the recommender system setting, using Twitter as an example.

\textbf{Evidence based on content} refers to whether the content of a measurement is sufficient to fully capture the target construct. For example, we may question whether a measure of ``socio-economic status'' that includes income, but does not account for wealth, accurately captures the content of the construct \citep{jacobs2019measurement}. In the recommender engine setting, content-based evidence asks us to reflect on whether the behaviors available on the platform are sufficient to capture a worthy notion of the construct ``value''. For example, if the only behavior observed on the platform were clicks by the user, then we may be skeptical of any measurement of ``value'' derived from user behavior. What content-based evidence makes clear is that to measure any worthy notion of ``value'', it is essential to design platforms in which users are empowered with richer channels of feedback. Otherwise, no measurement derived from user behavior will accurately capture the construct.

\textbf{Evidence based on cognitive processes.} Measurements derived from human behavior are often based on implicit assumptions about the cognitive processes subjects engage in. Cognitive process evidence refers to evidence about such assumptions, often derived from explicit studies with subjects. For example, consider a reading comprehension test. We assume that high-scoring students succeed by using critical reading skills, rather than a superficial heuristic like picking the answers with the longest length. To gain evidence about whether this assumption holds, we might, for instance, ask students to take the test while verbalizing what they are thinking.

Similarly, in the recommender engine setting, we want to verify whether user behaviors occur for the reasons we think they do. On Twitter, one might think to use $\Favorite$ as an anchor for Value $V$, assuming that $\prob(V = 1 \mid \Favorite = 1) \approx 1$. However, users actually favorite items for reasons that may not reflect value -- like to bookmark a tweet or to stop a conversation. Cognitive process evidence highlights the importance of user research in assessing the validity of any measure of ``value''.

\textbf{Evidence based on internal structure} refers to whether the observations the measurement is derived from conform to expected, theoretical relationships. For example, for a test with questions which we expect to be of increasing difficulty, we would assess whether students actually perform worse on later questions, compared to earlier ones. In the recommender system context, we may have expectations on which types of user behaviors should provide stronger signal for value. In Section \ref{sec:twitter-case-study}, we evaluated internal structure by comparing $\prob(V = 1 \mid \Bhvr = 1)$ for all behaviors.

\textbf{Evidence based on relations with other variables} is concerned with the relationships between the measurement and other variables that are external to the measurement. The external variables could be variables which the measurement is expected to be similar to or predict, as well as variables which the measurement is expected to differ from. For example, a new measure of depression should correlate with other, existing measures of depression, but correlate less with measures of other disorders. In the recommender system context, we might look at whether our derived measurement of ``value'' is predictive of answers that users give in explicit surveys about content they value. We could also verify that our measure of ``value'' does not differ based on protected attributes, like the sex or race of the author of the content. 

\textbf{Evidence based on consequences.} Finally, the consequences of a measurement cannot be separated from its validity. Consider a test to measure student mathematical ability. The test is used to sort students into beginner or advanced classes with the hypothesis that all students will do better after sorted into their appropriate class. If it turns out that students sorted by the test do \emph{not} perform better, that may give us reason to reassess the original test. In the recommender system context, if we find that after using our measurement of value to optimize recommendations, more users complain or quit the platform, then we would have reason to revise our measurement.
\section{Summary}
We have presented a framework for designing an objective function that captures a desired notion of ``value''. In line with the principles of measurement theory, we treat ``value'' as a theoretical construct which must be operationalized. Our framework allows the designer to operationalize ``value'' in a principled manner by specifying only an \emph{anchor variable} and the structure of the Bayesian network. Through these two choices, the designer has the flexibility to give ``value'' subtly different meanings. 

We applied our approach on the Twitter platform on millions of users. We do not, as typical of papers on recommendation, report engagement metrics. The reason is that if we expect our measure of ``value'' to differ from engagement, we cannot evaluate it simply by reporting engagement metrics. Instead, we discussed established ways to assess the validity of a measurement and how they translate to the recommendation system setting. For the scope of this work, we focused on assessing \emph{evidence based on internal structure} and found that our measure of ``value'' satisfied many desired theoretical relationships.
\section*{Acknowledgements}
We thank Naz Erkan for giving us the opportunity and freedom to conduct this work through her bold leadership and savvy managerial support. We thank Prakhar Biyani for his extensive effort in helping us apply our approach at scale at Twitter. We thank Tom Everitt for feedback on a draft of the paper.

\bibliography{refs}
\bibliographystyle{unsrtnat}

\clearpage
\appendix

\begin{table*}[h] 
    \centering
    {\renewcommand{\arraystretch}{1.3}
    \begin{tabular}{|l|c|c|c|} \multicolumn{4}{c}{$\prob(V=1 \mid \mathtt{Behavior}=1)$} \\
        \bottomrule
          $\mathtt{Behavior}$ & Naive Bayes & $\Click, \Open \nrightarrow \SLO$ & Full Model \\
          \bottomrule
         $\OptOut$ & -99.74 & -0.932 & -0.072 \\ \hline
         $\Click$ & -1.194 & 0.316 & 0.652 \\ \hline
         $\Open$ & -0.366 & 0.442 & 0.685 \\ \hline
         $\UAM$ & -1.092 & 0.157 & 0.719 \\ \hline
         $\VW$ & -0.475 & 0.254 & 0.772\\ \hline
         $\Linger{6}$ & -0.525 & 0.264 & 0.802 \\ \hline
         $\LC$ & -0.302 & 0.320 & 0.836 \\ \hline
         $\Reply$ & 0.358 & 0.570 & 0.932 \\ \hline
         $\Linger{12}$ & -0.254 & 0.245 & 0.948\\ \hline
         $\Fav$ & 0.579 & 0.672 & 0.949 \\ \hline
         $\RT$ & 0.680 & 0.720 & 0.956\\ \hline
         $\Linger{20}$ & 0.019 & 0.296 & 0.991 \\
         \hline
         $\Quote$ & 1.0 & 1.0 & 1.0 \\
         \bottomrule
    \end{tabular}
    }
    \caption{The same inferences as in Table \ref{tab:bnet_infs}, except without clamping to $[0, 1]$.}
    \label{tab:bnet_infs_unclamped}
\end{table*}

\end{document}